\newtheorem{theorem}{Theorem}
\newtheorem{lemma}{Lemma}[theorem]
\newtheorem{corollary}[theorem]{Corollary}
\newcolumntype{L}{>{$}l<{$}} 
\newcolumntype{C}{>{$}c<{$}} 
\begin{document}

\title{Recurrence solution of monomer-polymer models on two-dimensional rectangular lattices}

\author{Yong Kong\\
  Department of Biostatistics \\
  School of Public Health \\
  Yale University \\
  New Haven, CT 06520, USA \\
  email: \texttt{yong.kong@yale.edu}
}%

\date{}
\maketitle

\begin{abstract}
  The problem of counting polymer coverings on the rectangular lattices is investigated.
  In this model, a linear rigid polymer covers $k$ 
  adjacent lattice sites
  such that no two polymers occupy a common site.
  Those unoccupied lattice sites are considered as monomers.
  We prove that for a given number of polymers ($k$-mers),
  the number of arrangements for the polymers on two-dimensional rectangular lattices
  satisfies simple recurrence relations.
  These recurrence relations are quite general and apply for arbitrary polymer length ($k$) and
  the width of the lattices ($n$).  The well-studied monomer-dimer problem is a special case
  of the monomer-polymer model
  when $k=2$.
  It is known the enumeration of monomer-dimer configurations in planar lattices is \#P-complete.
  The recurrence relations shown here have the potential for hints
  for the solution of
  long-standing problems in this class of computational complexity.
\end{abstract}


\section{Introduction}

The study of the combinatorial problem of enumerations of rigid rodlike molecules
on lattices has a long history and has attracted interests from researchers
in diverse fields of physics, mathematics, and theoretical computer science.
The system has been studied as a prototypical model for phase transitions
in equilibrium statistical mechanics
\cite{Onsager1949,floryPhaseEquilibriaSolutions1956,zwanzigFirstOrderPhase1963,Ghosh2007,Dhar2021,Rodrigues2023}.
In this model, a linear rigid polymer covers $k$ 
adjacent lattice sites
such that no two polymers occupy a common site.
The polymers are usually called $k$-mers,
and those unoccupied lattice sites are considered as monomers.

When $k=2$, the model becomes the well-studied monomer-dimer model.
In 1961 a special case of the monomer-dimer model, the close packed dimer problem 
where the lattice is fully covered by dimers, was solved 
analytically~\cite{Kasteleyn1961,Fisher1961,Temperley1961}.
It was further shown that counting dimer coverings of any \emph{planar} lattices
can be solved using the same Pfaffian method~\cite{Kasteleyn1963}.
For the general monomer-dimer problem where unoccupied lattice sites (monomers) are allowed,
no solution has been found despite decades of efforts
\cite{Baxter1968b,Gaunt1969,Heilmann1972,Samuel1980c,Hayn1994,Tzeng2003,kenyonDimersAmoebae2006,Wu2006d,Izmailian2005,Kong2006c,Kong2006,Kong2006b,Kong2007,Allegra2015,Pogudin2017}.
The importance of monomer-dimer problem also comes from the fact that
various other statistical physics problems can be mapped to the monomer-dimer model
\cite{Kasteleyn1963,Fisher1966b,McCoy1973}.  
Solution of the monomer-dimer problem will lead to the solution of these other problems.
For example,  the Ising model in the absence of an external field,
which was solved by Onsager in 1944~\cite{Onsager1944} using complicated methods,
is mapped
to the close packed dimer model,
while in the presence of an external field it is mapped
to the general monomer-dimer model.
The model also acts as the classical limit of the
recently introduced quantum dimer model 
which has been investigated intensively as the central model
in modern theories of strongly correlated quantum matter~\cite{RokhsarK88}.

The two-fold dichotomy of computational complexity of the monomer-dimer problem,
for counting dimer coverings of a \emph{planar} lattice and a \emph{nonplanar} lattice,
as well as for counting \emph{close packed} dimer configurations
and counting coverings \emph{with nonzero monomers},
has garnered the interest of
researchers in theoretical computer science
\cite{Jerrum1987,valiantComplexityComputingPermanent1979,valiantComplexityEnumerationReliability1979}.
It has been shown that the enumeration of monomer-dimer configurations in \emph{planar} lattices is \#P-complete~\cite{Jerrum1987},
which indicates the problem is computationally ``intractable''.
The class \#P plays the same role for counting problems 
(such as counting dimer configurations)
as the more familiar NP class does for decision problems
(such as the well-known satisfiability problem)
\cite{Garey1979,aroraComputationalComplexityModern2009,wigdersonMathematicsComputation2019,mooreNatureComputation2011,welshComplexityKnotsColourings1993}.
The \#P-complete problems are at least as hard as the NP-complete
problems in computational complexity hierarchy.
If \emph{any} problem in the \#P-complete class 
is found to be solvable, every problem in \#P class
is solvable.
Currently 
``P versus NP'' problem is perhaps the major
outstanding problem in theoretical computer science.

In this paper we give recurrence solutions of monomer-polymer models on two-dimensional rectangular lattices.
These recurrence relations are quite general and apply for arbitrary polymer length ($k$) and
the width of the lattices ($n$).
The paper is organized as the follows.
The major result is the Theorem~\ref{Th:recurrence} in Section~\ref{S:recur}.
The proof of the theorem is given in Section~\ref{S:proof}.
The methods used in the proof are elementary.
In Section~\ref{S:remarks} the implications of the results are discussed.

\section{Recurrence for the number of \texorpdfstring{$k$-mers}{k-mers} coverings} \label{S:recur}

Consider a $n \times m$ two-dimensional rectangular lattice
with $n$ lattice sites in the horizontal direction and $m$ lattice sites in the vertical direction.
In the horizontal direction two kinds of boundary conditions will be considered:
free boundary condition and cylinder boundary condition.  In the latter case
the lattice sites in the $n$th column are wrapped back and linked to the lattice sites in the first column.
In the vertical direction only free boundary conditions will be considered.

Let $s$ denote the number of $k$-mers in the lattice,
and denote the number of configurations of the $s$ $k$-mers on a lattice with a width of $n$ and a length of $m$
by $a(k, n,m,s)$. Since $k$, $n$ and $s$ are fixed, in the following $a(k, n,m,s)$ is
abbreviated as $a_{m,s}$ or $a_{m}$ for brevity.
In the following the notation $\binom{p}{q}$ is used for the binomial coefficient of $p$ choose $q$.

The main result is stated in the following theorem:
\begin{theorem}[Recurrence] \label{Th:recurrence}
For given $k$, $n$ and $s$, the following recursive relation holds:
\begin{equation} \label{E:rec1}
 \sum_{i=0}^{s} (-1)^i \binom{s}{i} a_{m-i, s} = c(n, k)^s, \qquad m \ge ks,
\end{equation}
where $c(n, k)$ is a constant that depends on the boundary conditions as well as $n$ and $k$,
but not $m$ or $s$.
For free boundary condition,
\[
  c(n, k) =
  \begin{cases}
    2n-k+1 , & n \ge k ,\\
    n   ,     & n < k  .
    \end{cases}
\]
For cylinder boundary condition,
\[
  c(n, k) =
  \begin{cases}
    2n ,   & n \ge k , \\
    n    ,   & n < k .
    \end{cases}
\]
\end{theorem}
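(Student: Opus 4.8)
My plan is to recast the finite-difference identity as a statement about the pole of a generating function at $x=1$, and then to obtain that pole from a renewal (transfer-matrix) factorization of configurations into vertically stacked blocks. I introduce the bivariate generating function
\[
  \Phi(x,z) = \sum_{m\ge 0}\sum_{s\ge 0} a_{m,s}\, x^m z^s ,
\]
and write $A_s(x) = [z^s]\Phi(x,z) = \sum_m a_{m,s}\, x^m$. Since $(1-x)\sum_m a_m x^m = \sum_m (a_m-a_{m-1})x^m$, the left-hand side of \eqref{E:rec1} is exactly the $x^m$-coefficient of $(1-x)^s A_s(x)$. Thus the theorem is equivalent to showing that $A_s(x) = P_s(x)/(1-x)^{s+1}$ for a polynomial $P_s$ with $\deg P_s \le ks$ and $P_s(1) = c(n,k)^s$: the degree bound forces the $x^m$-coefficient of $P_s(x)/(1-x)$ to equal $P_s(1)$ for every $m\ge ks$, which is the asserted constant.

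To produce this form I factor configurations top-to-bottom. Call the interface between two consecutive rows \emph{empty} if no vertical $k$-mer crosses it, and cut the lattice at every empty interface. Every configuration then decomposes uniquely into a vertical stack of \emph{blocks}, each a strip with empty top and bottom interfaces and no empty interface inside (a single all-monomer row is a height-$1$ block). This unique factorization gives the renewal identity
\[
  \Phi(x,z) = \frac{1}{1 - E(x,z)}, \qquad E(x,z) = \sum_{\text{blocks } B} x^{h(B)} z^{p(B)},
\]
where $h(B),p(B)$ are the height and number of $k$-mers of $B$. Grouping blocks by $k$-mer count gives $E(x,z) = x + \sum_{j\ge 1} z^j E_j(x)$, with the $j=0$ term $x$ (the empty-row block) and each $E_j$ a polynomial in $x$.

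The only block data I actually need is $E_1$. A single-$k$-mer block is either one horizontal $k$-mer alone in a row or one vertical $k$-mer occupying $k$ consecutive rows, so for the free boundary with $n\ge k$ one gets $E_1(x) = (n-k+1)x + n x^k$, while the three remaining cases only change the horizontal count, yielding $E_1(1) = 2n-k+1,\, n,\, 2n,\, n$; in each case $E_1(1) = c(n,k)$. Writing $E = x + zD$ with $D = E_1 + zE_2 + \cdots$ and expanding,
\[
  \Phi(x,z) = \sum_{j\ge 0} \frac{z^j D(x,z)^j}{(1-x)^{j+1}},
\]
so $A_s(x) = P_s(x)/(1-x)^{s+1}$ with $P_s(x) = \sum_{j=0}^{s}(1-x)^{s-j}[z^{s-j}]D(x,z)^j$ a polynomial; evaluating at $x=1$ kills every term except $j=s$, leaving $P_s(1) = ([z^0]D)^s = E_1(1)^s = c(n,k)^s$.

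It remains to bound $\deg P_s$. Since every internal interface of a block must be crossed by some vertical $k$-mer and each vertical $k$-mer crosses exactly $k-1$ interfaces, a block with $j$ $k$-mers has height at most $1+j(k-1)$, whence $\deg E_j \le 1 + j(k-1)$; tracking degrees through the expansion gives $\deg P_s \le ks$. Together with $P_s(1)=c(n,k)^s$ this establishes \eqref{E:rec1} for all $m\ge ks$. The generating-function extraction and the four-case evaluation of $E_1(1)$ are routine; the step I expect to be the main obstacle is making the block factorization rigorous together with the height bound $h(B)\le 1+p(B)(k-1)$, since this is precisely what pins the threshold to $m\ge ks$ rather than merely to all large $m$.
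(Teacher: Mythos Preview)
Your argument is correct and takes a genuinely different route from the paper's. The paper proceeds by pure combinatorics: it introduces restricted counts $a_{(0,1,\dots,t)m,s}$ requiring at least one $k$-mer to touch each of several specified rows, proves three lemmas (``go to the top'', ``go to the bottom'', and an inductive vanishing identity for the barred terms), and uses them to peel the alternating sum one binomial index at a time until all $s$ $k$-mers are pinned to rows $1,k+1,\dots,(s-1)k+1$, where they place independently and the count is $c(n,k)^s$. Your approach instead factors every configuration vertically at the empty interfaces, obtains the renewal identity $\Phi=1/(1-E)$, and extracts $A_s(x)=P_s(x)/(1-x)^{s+1}$ with $P_s(1)=E_1(1)^s=c(n,k)^s$ and $\deg P_s\le ks$ from the block height bound $h\le 1+j(k-1)$. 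The generating-function route is more structural: it yields directly the rational form of $A_s(x)$ that the paper only records \emph{a posteriori} in its concluding remarks, and it makes transparent why the constant is the $s$th power of the single-polymer count. The paper's route is more elementary (no generating functions, only bijections and induction), and its intermediate row-restricted enumerations may be of independent interest. Your stated worry about rigorizing the factorization and the height bound is unwarranted: the decomposition into maximal strips with no empty internal interface is the standard unique renewal decomposition, and $h-1\le j(k-1)$ follows immediately by double-counting vertical-$k$-mer/interface incidences.
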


From Theorem~\ref{Th:recurrence} we can obtain the following recurrence:
\begin{corollary}[Another recurrence]
\begin{equation} \label{E:rec2}
 \sum_{i=0}^{s+1} (-1)^i \binom{s+1}{i} a_{m-i, s} = 0, \qquad m \ge ks + 1
\end{equation}
\end{corollary}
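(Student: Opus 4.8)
The plan is to derive the corollary purely algebraically from Theorem~\ref{Th:recurrence}, reading the left-hand side of \eqref{E:rec1} as an $s$-fold iterated difference of the sequence $a_{m,s}$ in the length variable $m$. The guiding observation is that the alternating binomial sum in \eqref{E:rec2} is the corresponding $(s+1)$-st difference, which factors as one further difference applied to the $s$-th difference. Since Theorem~\ref{Th:recurrence} asserts that the $s$-th difference is the \emph{constant} $c(n,k)^s$ on the entire range $m \ge ks$, taking one additional difference must annihilate it. I would therefore present the corollary as an immediate consequence, making the cancellation of the constant explicit.

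Concretely, first I would invoke Pascal's rule $\binom{s+1}{i} = \binom{s}{i} + \binom{s}{i-1}$ to split the sum in \eqref{E:rec2} into two pieces:
\begin{equation*}
\sum_{i=0}^{s+1} (-1)^i \binom{s+1}{i} a_{m-i,s}
 = \sum_{i=0}^{s+1} (-1)^i \binom{s}{i} a_{m-i,s}
 + \sum_{i=0}^{s+1} (-1)^i \binom{s}{i-1} a_{m-i,s}.
\end{equation*}
In the first sum the $i=s+1$ term drops out because $\binom{s}{s+1}=0$, so it collapses to the left-hand side of \eqref{E:rec1} evaluated at length $m$. In the second sum the $i=0$ term drops out because $\binom{s}{-1}=0$; after the substitution $j=i-1$ it becomes $-\sum_{j=0}^{s} (-1)^{j} \binom{s}{j} a_{(m-1)-j,s}$, which is exactly the negative of the left-hand side of \eqref{E:rec1} evaluated at length $m-1$.

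Next I would apply Theorem~\ref{Th:recurrence} to each piece separately. The first piece equals $c(n,k)^s$ provided $m \ge ks$, and the second equals $-c(n,k)^s$ provided $m-1 \ge ks$. Summing yields $c(n,k)^s - c(n,k)^s = 0$, and since $c(n,k)$ does not depend on $m$, the cancellation is exact. The binding constraint is the second one, $m-1 \ge ks$, that is $m \ge ks+1$, which is precisely the range stated in \eqref{E:rec2}.

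The argument is elementary and I do not expect a genuine obstacle. The only point requiring care is the bookkeeping of the summation limits: one must verify that the reindexing $j=i-1$ introduces no spurious boundary term, and that it is the evaluation of the $s$-th difference at the shifted index $m-1$ that forces the strictly stronger hypothesis $m \ge ks+1$ rather than $m \ge ks$. Everything else is a direct substitution of Theorem~\ref{Th:recurrence}.
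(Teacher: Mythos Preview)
Your proposal is correct and follows essentially the same approach as the paper: both arguments difference the identity of Theorem~\ref{Th:recurrence} at lengths $m$ and $m-1$ and combine them via Pascal's rule $\binom{s+1}{i}=\binom{s}{i}+\binom{s}{i-1}$, the only cosmetic distinction being that the paper subtracts the two instances of \eqref{E:rec1} and then collects terms, whereas you start from the $(s+1)$-fold sum and split it. Your explicit tracking of why the hypothesis tightens to $m\ge ks+1$ is a nice addition.
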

\begin{proof}
Eq.~\eqref{E:rec2} can be derived from Eq.~\eqref{E:rec1}: 
substitute $m-1$ into $m$ in Eq.~\eqref{E:rec1}, and subtract the two equations:
\begin{align*}
  & a_m - \left[ \binom{s}{1} + \binom{s}{0} \right] a_{m-1}
    + \cdots
    + (-1)^s \left[  \binom{s}{s} + \binom{s}{s-1} \right] a_{m-s} - (-1)^{s} a_{m-1-s} \\
  =\enskip & a_m - \binom{s+1}{1} a_{m-1} + \binom{s+1}{2} - \cdots + (-1)^s \binom{s+1}{s} - (-1)^{s} a_{m-1-s} \\
  =\enskip & 0.
\end{align*}
In above the following binomial identity is used:
\begin{equation} \label{E:binom}
  \binom{j+1}{i} = \binom{j}{i} + \binom{j}{i-1}, \qquad i > 0 .
\end{equation}

\end{proof}

To illustrate the recurrence stated in Theorem~\ref{Th:recurrence},
some examples for different values of $k$ are listed in Table~\ref{T:a}.
In the table, $n=7$, $s=3$, and $m=20$. The lattices have free boundary conditions.
The $k$-mer lengths take the range of $2,3,4,5$.
The numbers are computed by extending the methods originally developed for the monomer-dimer problem ($k=2$) to handle arbitrary lengths of $k$-mers
\cite{kongGeneralRecurrenceTheory1999,Kong2006,Kong2006b,Kong2006c,Kong2007}.

For example,
when $k=3$, we have
\[
  1644154 - \binom{3}{1} 1383884 + \binom{3}{2} 1152702 - 948880 = 1728 = (2 \times 7 - 3 + 1)^3.
\]

\begin{table}
  \centering
  \caption{
The numbers $a_{m,s}$ for different $k$, with $n=7$, $s=3$, and $m=20$, on lattices with free boundary.
\label{T:a}}
  \begin{tabular}{LLLLL}
    \hline \hline
    &  \multicolumn{4}{C}{m} \\\cline{2-5}
   k & 17 & 18 & 19 & 20 \\
    \hline
    2 &  1491126  & 1788970 & 2124072 & 2498629 \\
    3 &  948880   & 1152702 & 1383884 & 1644154 \\
    4 &  560552   & 692133  & 843008  & 1014508 \\
    5 &  305326   & 384872  & 477398  & 583904  \\
\hline
\end{tabular}
\end{table}

\section{Proof of the recurrence} \label{S:proof}

First we define the possible states of a $k$-mer
covering a given lattice site.
Notice that for a given lattice site, there are $k+2$ configurational states for a $k$-mer.
The monomer state, where the lattice site is empty, is denoted as state $0$;
When the lattice site is occupied by the first part of the $k$-mer (in the positive $m$ direction),
we say the $k$-mer is in state $1$ for the given lattice site,
and so on.
When the last part of the $k$-mer occupies the lattice site, it's in state $k$.
If the lattice site is occupied by a horizontal $k$-mer, a state $k+1$ is assigned.
Figure~\ref{F:states} shows the $5$ states for a trimer ($k=3$).
Note that the states are specific for a lattice site in a given row:
if the $k$-mer is in a state $j$ for a lattice site in the $i$th row,
then it is in state $j-1$ for the site of the $(i+1)$th row,

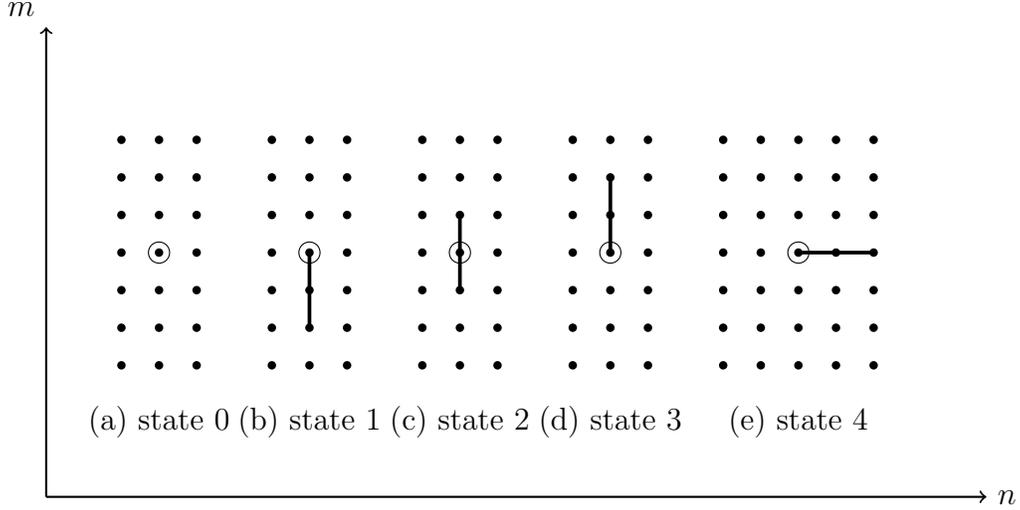
\begin{figure}[!ht]
\begin{tikzpicture}[x=5mm, y=5mm]

  \foreach \x in {0,1,2, 4,5,6, 8,9,10, 12,13,14, 16,17,18,19,20}
    \foreach \y in {0,1,...,6}
       \fill (\x, \y) circle[radius=1.6pt];

\draw (1,3) circle (4pt);
\draw (5,3) circle (4pt);
\draw (9,3) circle (4pt);
\draw (13,3) circle (4pt);
\draw (18,3) circle (4pt);

\draw[line width=1.4pt] (5,1) -- (5,3);
\draw[line width=1.4pt] (9,2) -- (9,4);
\draw[line width=1.4pt] (13,3) -- (13,5);
\draw[line width=1.4pt] (18,3) -- (20,3);

\draw[thick,->] (-2,-3.5) -- (23,-3.5) node[anchor=west] {$n$};
\draw[thick,->] (-2,-3.5) -- (-2, 9) node[anchor=south east] {$m$};

\node at (1, -1.5)   (0) {(a) state $0$};
\node at (5, -1.5)   (1) {(b) state $1$};
\node at (9, -1.5)   (2) {(c) state $2$};
\node at (13, -1.5)  (3) {(d) state $3$};
\node at (18, -1.5)  (4) {(e) state $4$};

\end{tikzpicture}
\caption{The configurational states of one $k$-mer on one lattice site.
  For a given lattice site each $k$-mer can have $k+2$ states.
  Here is an example for $k=3$.
  For the circled center site,
  the trimer can have $5$ states:
  (a) state $0$: the site is empty;
  (b) state $1$: the site is occupied by the first part of a vertical trimer;
  (c) state $2$: the site is occupied by the middle part a vertical trimer;
  (d) state $3$: the site is occupied by the end part a vertical trimer;
  (e) state $4$: the site is occupied by a horizontal trimer.
}\label{F:states}
\end{figure}

For proving the recurrence, we need to define certain numbers
that count $k$-mer configurations with some $k$-mers restricted to given rows.
   Let $a_{(p_1, p_2, \cdots)m,s}^{(q_1, q_2. \cdots)}$ denote the number of configurations to have $s$ $k$-mers
   in a $n \times m$ lattice with the requirement that
   at least one $k$-mer touching the $p_1$th row from above (and none from below),
   at least one $k$-mer touching the $p_2$th row from above (and none from below), etc.,
   and
   at least one $k$-mer touching the $q_1$th row from below (and none from above),
   at least one $k$-mer touching the $q_2$th row from below (and none from above), etc..
   For those $k$-mers restricted in the \emph{subscript} of this notation,
   the $k$-mers are in the states of $k$ or $k+1$ with respect to the $p_i$th row,
   while those $k$-mers restricted in the \emph{superscript} 
   are in the states of $1$ or $k+1$ with respect to the $q_i$th row.

   Furthermore, another notation will be introduced: if the numbers $p_i$ in the subscripts have a bar, it means that
   there is at least one $k$-mers with state in the set $\{1, 2, \dots, k-1\}$
   with respect of the $p_i$th row of the lattice: $a_{(p_1, p_2, \cdots, \overline{p_i}, \cdots)m,s}^{(q_1, q_2. \cdots)}$.
   
   In the proof below, we only need $p_i$'s to be in the rows of the set
   $\{1, k+1, 2k+1, \dots, \}$,
   and $q_i$'s to be the last row, i.e., the $m$th row.
   Hence the notation under these conditions becomes
   $a_{(1, k+1, 2k+1, \cdots)m,s}^{(m)}$.
   For brevity, a shorthand version will be used, which only uses the coefficients of $k$ in the \emph{subscripts}:
   \[
     a_{(0, 1, 2, \cdots)m,s}^{(m)} ,
   \]
   and
   \[
     a_{(0, 1, 2, \cdots, \bar{i}, \cdots)m,s}^{(m)} .
   \]

   The sum of the numbers of rows in the subscript and superscript within the parentheses is the number of restricted
   $k$-mers. The rest of the $k$-mers can be arranged freely on the lattice
   (but not in contradiction to those restricted rows).
   Note that by symmetry, the superscript and subscript can be switched in totality without affecting the value.
   With these definitions, we first prove the following lemmas.
   The proof of the recurrence in Theorem~\ref{Th:recurrence} is based on the following three lemmas.

   The first lemma converts the difference between the enumerations of
   a lattice with a length $m$ and a lattice with a length $m-1$
   to an enumeration with restrictions on the last row.%
   \begin{lemma}[Go to the top]  \label{L:top}
     \begin{equation}            \label{E:top}
       a_{(0, 1, 2, \cdots, t)m,s} - a_{(0, 1, 2, \cdots, t)m-1,s} = a_{(0, 1, 2, \cdots, t)m,s}^{(m)}, 
     \end{equation}
     where the coefficients in the subscripts within the parentheses of the three terms are the same,
     and can have bars over them.
   \end{lemma}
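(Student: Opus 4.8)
The plan is to prove the identity by a direct combinatorial decomposition of the configurations counted on the left, splitting them according to the occupancy of the topmost row, the $m$th row. The single fact that drives the whole argument is that row $m$ is the top of the lattice: a $k$-mer can occupy a site in row $m$ only as the top cell of a vertical $k$-mer (state $1$, with the remaining $k-1$ cells lying in rows $m-k+1,\dots,m-1$) or as a horizontal $k$-mer lying entirely in row $m$ (state $k+1$). Any state $j\in\{2,\dots,k\}$ at row $m$ would force a cell in some row above $m$, which does not exist. Consequently ``row $m$ is non-empty'' is \emph{exactly} the event ``at least one $k$-mer touches the $m$th row from below,'' and since nothing can lie above row $m$ the clause ``and none from above'' in the definition of the superscript is satisfied vacuously. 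Hence the superscript $(m)$ records nothing more than that the top row is occupied.

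First I would fix the common subscript pattern $(0,1,2,\cdots,t)$, allowing bars over any of its entries, and hold $k$, $n$, $s$ fixed. I would then partition the configurations enumerated by $a_{(0, 1, 2, \cdots, t)m,s}$ into two disjoint classes according to whether row $m$ is empty or carries at least one $k$-mer. By the observation above, the second class is counted precisely by $a_{(0, 1, 2, \cdots, t)m,s}^{(m)}$, the right-hand side of the lemma.

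For the first class I would exhibit the evident height-reducing bijection: deleting the empty top row sends an $n\times m$ configuration whose $m$th row is empty to an $n\times(m-1)$ configuration, and appending a fresh empty row on top is its inverse. This map alters neither the total number $s$ of $k$-mers nor any of the subscript constraints, because every constrained row sits at height $1,k+1,\dots,tk+1$, strictly below row $m$. Thus the first class is counted by $a_{(0, 1, 2, \cdots, t)m-1,s}$, and adding the two counts gives
\[
 a_{(0, 1, 2, \cdots, t)m,s} = a_{(0, 1, 2, \cdots, t)m-1,s} + a_{(0, 1, 2, \cdots, t)m,s}^{(m)},
\]
which is the asserted identity after transposing one term.

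The step I expect to demand the most care is pinning down precisely that the superscript $(m)$ enumerates exactly the top-row-occupied configurations and nothing less. Here I must confirm that horizontal $k$-mers in row $m$ (state $k+1$) are correctly included among the admissible configurations, that the ``none from above'' clause is genuinely vacuous at the top, and that the decomposition is independent of the subscript data: none of the subscript rows, barred or unbarred, can coincide with row $m$, so the delete-top-row bijection preserves every constraint and leaves $s$ invariant. This reduces to verifying the intended range condition $tk+1<m$ and checking that a barred constraint, which asserts a vertical $k$-mer passing through the lower boundary of a constrained row, involves only rows below $m$ and is therefore untouched by adding or removing an empty top row.
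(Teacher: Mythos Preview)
Your proposal is correct and follows essentially the same approach as the paper: partition the configurations counted by $a_{(0,1,\dots,t)m,s}$ according to whether the top row $m$ is empty or occupied, identify the occupied class with the superscript-$(m)$ count, and biject the empty-top-row class with configurations on the $n\times(m-1)$ lattice. Your write-up is considerably more explicit than the paper's terse three-sentence proof about why the superscript condition at row $m$ coincides with mere non-emptiness (only states $1$ and $k+1$ are possible there, and ``none from above'' is vacuous) and about why the subscript constraints, lying strictly below row $m$, are preserved by the bijection.
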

   \begin{proof}
     Among all the $k$-mers configurations counted by $a_{(0, 1, 2, \cdots)m,s}$,
     some configurations have one or more $k$-mers
     touching the last row.
     These configurations are counted by $a_{(0, 1, 2, \cdots)m,s}^{(m)}$.
     For the rest of configurations
     none of the $k$-mers cover any sites of the last row: the last row is empty (unoccupied by $k$-mers).
     This is equivalent to a lattice of length $m-1$, hence
     the number of these configurations is counted by $a_{(0, 1, 2, \cdots)m-1,s}$.
     \end{proof}

     The second lemma converts an enumeration with restrictions on the last row (the $m$th row)
     to a sum of two terms:
     one term with the restrictions on the last row moved
     to the restrictions on a row at the bottom of the lattice
     that are $k$ sites above the topmost previously restricted row (the $[(t+1)k+1]$th row),
     and the other term is with the barred version on the same row, but with the restriction on the last row
     unchanged.
   \begin{lemma}[Go to the bottom] \label{L:bottom}
     \begin{align}                 \label{E:bottom}
       a_{(0,\dots,t)m,s}^{(m)}     &= a_{(0,\dots,t,t+1)m,s} + a_{(0,\dots,t,\overline{t+1})m,s}^{(m)}  , \\
       a_{(0,\dots,\bar{t})m,s}^{(m)} &= a_{(0,\dots,\bar{t},t+1)m,s} + a_{(0,\dots,\bar{t},\overline{t+1})m,s}^{(m)}  .
     \end{align}
   \end{lemma}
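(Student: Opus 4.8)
The plan is to prove both identities of Lemma~\ref{L:bottom} by a single \emph{reflection bijection}, splitting the configurations counted on the left-hand side according to one local event at the interface just below row $(t+1)k+1$. Write $r=(t+1)k+1$ and let $I$ denote the horizontal interface separating row $(t+1)k$ from row $r$. The only restrictions carried by $a_{(0,\dots,t)m,s}^{(m)}$ are the subscript anchors in rows $1,k+1,\dots,tk+1$ (all lying weakly below row $(t+1)k$, since $tk+1\le(t+1)k$ for $k\ge 1$) together with the superscript anchor in the top row $m$; row $r$ itself is unconstrained. I would therefore partition these configurations by whether some $k$-mer \emph{crosses} $I$, i.e.\ occupies both row $(t+1)k$ and row $r$. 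By the state bookkeeping of Figure~\ref{F:states}, a $k$-mer crosses $I$ precisely when it is in one of the states $1,\dots,k-1$ with respect to row $r$, which is exactly the barred condition $\overline{t+1}$. Thus ``crosses $I$'' versus ``does not cross $I$'' is an obvious exclusive and exhaustive split, and the crossing case contributes exactly the second summand $a_{(0,\dots,t,\overline{t+1})m,s}^{(m)}$, since there the bottom anchors and the top anchor are left untouched.

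The substance of the argument lies in the non-crossing case. When no $k$-mer crosses $I$, the lattice decouples along $I$ into a bottom slab (rows $1,\dots,(t+1)k$) and a top slab (rows $r,\dots,m$) whose $k$-mers never straddle the interface. I would then \emph{reflect the top slab vertically}, sending each row $r+j\mapsto m-j$ and carrying every $k$-mer to its mirror image. This map fixes the bottom slab (so the anchors in rows $1,\dots,tk+1$, and the barred/unbarred status of row $t$ needed for the second identity, are preserved), preserves the number of $k$-mers $s$ and the non-crossing of $I$, and is an involution; hence it is a bijection on the non-crossing configurations with the prescribed bottom anchors. Under the state relabelling it induces, a vertical $k$-mer in state $j$ goes to one in state $k+1-j$ while a horizontal $k$-mer (state $k+1$) stays horizontal, so the superscript condition at row $m$ (states $1$ or $k+1$, ``touching from below'') is transported exactly to the condition ``states $k$ or $k+1$ at row $r$,'' that is, ``row $r$ touched from above.''

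It remains to check that the reflected configurations are counted by the \emph{first} summand $a_{(0,\dots,t,t+1)m,s}$ with no leftover restriction, and this is where I expect the only genuine subtlety. Two points must be verified. First, the image of row $r$ is unbarred: the original configuration does not cross $I$, nor does its reflection, so in the image no $k$-mer occupies both rows $(t+1)k$ and $r$, which is exactly ``none from below'' at row $r$; combined with the transported ``from above'' this is the unbarred subscript $t+1$. Second, the reflection seems to impose a constraint on the \emph{new} top row $m$ (the mirror of the old row $r$), but at the extreme row $m$ a vertical $k$-mer can only be in state $1$ for purely geometric reasons (states $2,\dots,k$ would demand occupied rows above $m$), so ``no $k$-mer in states $2,\dots,k$ at row $m$'' holds automatically and the putative constraint is vacuous. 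Hence the reflection is a bijection from the non-crossing configurations on the left onto \emph{all} configurations counted by $a_{(0,\dots,t,t+1)m,s}$, and summing the two cases gives \eqref{E:bottom}; the second identity follows verbatim because row $t$ sits in the untouched bottom slab. The main obstacle is thus not the case split but making the reflection airtight, especially confirming that it lands precisely in the unbarred class and leaves no residual condition on the new top row.
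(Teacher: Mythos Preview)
Your proposal is correct and follows essentially the same approach as the paper: split the configurations counted by the left-hand side according to whether some $k$-mer at row $r=(t+1)k+1$ is in a state from $\{1,\dots,k-1\}$ (your ``crosses $I$''), keep those as the barred summand, and for the complementary class flip the slab between rows $r$ and $m$ to land in $a_{(0,\dots,t,t+1)m,s}$. The paper states the flip in one sentence; you additionally verify that the reflection is a genuine bijection (the image carries exactly the unbarred anchor at row $r$ and leaves no residual constraint at the new row $m$), which is a welcome elaboration but not a different method.
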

   \begin{proof}
     For these $k$-mers configurations counted by $a_{(0,\dots,t)m,s}^{(m)}$,
     consider the $((t+1)k+1)$th row.
     The row can be empty,
     or if there are $k$-mers that intersect with this row,
     either they only have states in the set of $\{k, k+1\}$ on this row,
     or there exists at least one $k$-mer
     that is of state $\{1,\dots,k-1\}$.
     For the former case (empty or in the states of $\{k, k+1\}$), we can \emph{flip} the lattice between the $((t+1)k+1)$th row and the $m$th row,
     and the number of configurations in the flipped lattice are counted as $a_{(0,\dots,t,t+1)m,s}$.
     The rest of the configurations are counted by $a_{(0,\dots,t,\overline{t+1})m,s}^{(m)}$.
     Similar arguments can be used for the second equation.
     \end{proof}

In the following we use $a_{(0, \dots, \bar{t})  m}^{(m)}(j)$ to emphasize
that there are
$j$ $k$-mers that
are not restricted to the rows specified by the subscripts or the superscript,
with $t+2+j=s$.
\begin{lemma}[$j$ unrestricted $k$-mers] \label{L:j-free}
\begin{align} 
  \sum_{i=0}^{j+1} (-1)^i \binom{j+1}{i} a_{(0, \dots, \bar{t})  m-i, s}^{(m-i)}(j) &= 0, \label{E:jfree-1} \\
  \sum_{i=0}^{j+1} (-1)^i \binom{j+1}{i} a_{(0, \dots, {t})  m-i, s}^{(m-i)}(j) &= 0, \label{E:jfree-2}
\end{align}
where $j$ stands for the number of unrestricted $k$-mers, with $t+2+j=s$, and $m \ge ks + 1$.
Similar identities also hold for
$a_{(0, \dots, \bar{t})  m-i, s}$
and
$a_{(0, \dots, {t})  m-i, s}$,
where there is no restriction on the last row.
In these cases $t+1+j=s$.
\end{lemma}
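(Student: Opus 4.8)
The plan is to read the alternating binomial sum as an iterated backward finite difference and to prove its vanishing by induction on $j$, the number of unrestricted $k$-mers. Writing $\Delta f(m) = f(m) - f(m-1)$, one has $\Delta^{j+1} f(m) = \sum_{i=0}^{j+1} (-1)^i \binom{j+1}{i} f(m-i)$, so each identity in the lemma asserts that the corresponding count, viewed as a function of $m$ (with the superscript $(m-i)$ tracking the shifting last row), is annihilated by $\Delta^{j+1}$. I would prove all four variants simultaneously — the two superscripted counts $a_{(0,\dots,t)m,s}^{(m)}(j)$ and $a_{(0,\dots,\bar t)m,s}^{(m)}(j)$ of \eqref{E:jfree-1}--\eqref{E:jfree-2}, together with the two subscript-only counts $a_{(0,\dots,t)m,s}(j)$ and $a_{(0,\dots,\bar t)m,s}(j)$ — since Lemmas~\ref{L:top} and~\ref{L:bottom} interlock these families and the induction cannot be closed on any one of them in isolation.

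For the base case $j=0$ every one of the $s$ $k$-mers is pinned: the subscripts fix $k$-mers on the rows $1, k+1, \dots, tk+1$, and, when present, the superscript fixes a $k$-mer touching the last row from below. Because no free $k$-mer occupies the bulk, increasing $m$ by one merely inserts an empty row above the topmost subscript-restricted $k$-mer (and below the superscript restriction, when present); for $m \ge ks+1$ there is enough vertical room that the set of admissible configurations, together with their horizontal placements and local vertical states, is unchanged. Hence each $j=0$ count is independent of $m$ on this range, so $\Delta^{1}(\cdot)=0$, which is exactly the asserted identity at $j=0$.

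For the inductive step, assume $\Delta^{j}(\cdot)=0$ for all four variants at level $j-1$. I would first dispose of the subscript-only counts: by Lemma~\ref{L:top}, $\Delta\, a_{(0,\dots,t)m,s}(j) = a_{(0,\dots,t)m,s}^{(m)}(j-1)$, a superscripted count with one fewer free $k$-mer, so the inductive hypothesis gives $\Delta^{j+1} a_{(0,\dots,t)m,s}(j) = \Delta^{j}\bigl(a_{(0,\dots,t)m,s}^{(m)}(j-1)\bigr)=0$, and the barred version is identical. I would then treat the superscripted counts with Lemma~\ref{L:bottom}, which writes $a_{(0,\dots,t)m,s}^{(m)}(j) = a_{(0,\dots,t,t+1)m,s}(j) + a_{(0,\dots,t,\overline{t+1})m,s}^{(m)}(j-1)$: the first summand is a subscript-only count with $j$ free $k$-mers, already annihilated by $\Delta^{j+1}$ by the previous sentence, while the second is a superscripted count with $j-1$ free $k$-mers, annihilated by $\Delta^{j}$ and hence by $\Delta^{j+1}$ through the inductive hypothesis. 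Summing gives $\Delta^{j+1} a_{(0,\dots,t)m,s}^{(m)}(j)=0$, and the barred identity follows in the same way from the second equation of Lemma~\ref{L:bottom}.

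The substantive points, rather than the finite-difference algebra, are twofold. First, one must justify the base case rigorously: that with no free $k$-mer the count is genuinely $m$-independent once $m \ge ks+1$, i.e.\ that the topmost subscript restriction and the superscript restriction never compete for the same rows on that range — this is exactly where the hypothesis $m \ge ks+1$ is consumed. Second, one must verify that Lemmas~\ref{L:top} and~\ref{L:bottom} apply verbatim at every shifted length $m-i$ occurring in the sum, so that the operator identities hold term by term, and that the subscript-only family is settled \emph{before} the superscripted family within each induction step so that no circularity arises. Once these are secured, the bookkeeping of the free counts (the relation $t+2+j=s$ collapsing to $t+1+j=s$ when the superscript is dropped) is routine.
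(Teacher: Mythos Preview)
Your proposal is correct and follows essentially the same approach as the paper: induction on $j$ with the $j=0$ case given by $m$-independence of the fully constrained count, and the inductive step driven by Lemmas~\ref{L:top} and~\ref{L:bottom}. The only difference is organizational --- you package the alternating sum as an iterated difference $\Delta^{j+1}$ and settle the subscript-only variants before the superscripted ones within each step, whereas the paper applies Lemma~\ref{L:bottom} first and then Lemma~\ref{L:top} in a single chain to land on three superscripted $(j-1)$ terms; the underlying reductions are identical.
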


\begin{proof}

These identities can be proved by mathematical induction.
For $j=0$ the identities are true trivially:
all $s$ $k$-mers are constrained in the first $s-1$ rows and the last row,
with one and only one $k$-mer intersect with each row, 
hence
\begin{align*}
  a_{(0, \dots, \bar{t})  m}^{(m)}(0) - a_{(0, \dots, \bar{t})  m-1}^{(m-1)}(0) &= 0, \\
  a_{(0, \dots, {t})  m}^{(m)}(0)    - a_{(0, \dots, {t})  m-1}^{(m-1)}(0)  &= 0 .
\end{align*}

Assume the identities are true for $j-1$, then for $j$ unrestricted $k$-mers we can use Lemma~\ref{L:top} and Lemma~\ref{L:bottom}:
\begin{align*}
  & \sum_{i=0}^{j+1} (-1)^i \binom{j+1}{i} a_{(0, \dots, \bar{t})  m-i}^{(m-i)}(j)  \\
 = & \sum_{i=0}^{j} (-1)^i \binom{j}{i}
    \left[ a_{(0, \dots, \bar{t})  m-i}^{(m-i)}(j) -  a_{(0, \dots, \bar{t})  m-i-1}^{(m-i-1)}(j) \right]\\
 = & \sum_{i=0}^{j} (-1)^i \binom{j}{i}
    \left[ [ a_{(0,\dots,\bar{t},t+1)m-i}(j) + a_{(0, \dots, \bar{t},\overline{t+1})  m-i}^{(m-i)}(j-1) ] \right. \\
    &\left.{} - [ a_{(0,\dots,\bar{t},t+1)m-i-1}(j) + a_{(0, \dots, \bar{t},\overline{t+1})  m-i-1}^{(m-i-1)}(j-1) ] \right] \\
  = & \sum_{i=0}^{j} (-1)^i \binom{j}{i}
    \left[ a_{(0,\dots,\bar{t},t+1)m-i}^{(m-i)}(j-1)
    +  a_{(0, \dots, \bar{t},\overline{t+1})  m-i}^{(m-i)}(j-1) - a_{(0, \dots, \bar{t},\overline{t+1})  m-i-1}^{(m-i-1)}(j-1) \right] .
\end{align*}
In the first step the binomial identity Eq.~\eqref{E:binom} is used.
In the second step Lemma~\ref{L:bottom} is used,
and in the third step Lemma~\ref{L:top} is used.
The three summands in the last sum all have $j-1$ unrestricted $k$-mers,
and by the assumption their sums all vanish.

The second identity Eq.~\eqref{E:jfree-2} and the identities without superscripts
can be proved similarly.

\end{proof}

Note that when $j=s$, Eq.~\eqref{E:rec2} is obtained by using the identities of Lemma~\ref{L:j-free} without superscripts.

Finally, the recurrence of Eq.~\eqref{E:rec1} in Theorem~\ref{Th:recurrence} can be proved.
In the proof the following steps are used repetitively, reducing the number of unrestricted $k$-mers by one in each iteration:
the binomial identity Eq.~\eqref{E:binom} is used
to write the summand into a difference of two terms (Eq.~\eqref{E:s-1} and Eq.~\eqref{E:s-2}),
Lemma~\ref{L:top} is then used to convert the difference of the two terms
into one term by introducing the restriction to the top row (Eq.~\eqref{E:s-1b} and Eq.~\eqref{E:s-2b},
where in Eq.~\eqref{E:s-1b} $a_{(0)m-i} = a_{m-i}^{(m-i)}$),
thus reducing the number of unrestricted $k$-mers by one.
The Lemma~\ref{L:bottom} is then used to bring this restriction
on the top row to the bottom (Eq.~\eqref{E:s-2c}),
releasing the constraints on the top row so that Lemma~\ref{L:top} can be applied again in the next iteration.
In doing so this step
introduces an extra term with a bar in the subscripts.
Then Lemma~\ref{L:j-free} is used to make the sum of this extra term vanish (Eq.~\eqref{E:s-2d}).
After one iteration of these steps,
the number of unrestricted $k$-mers is reduced by one.
In the final step, only one term remains,
where the restrictions are 
on bottom $s$ rows of the lattice, with the rows indexed as: $1, k+1, 2k+1, \dots, (s-1)k+1$, each separated from the next by a length
of $k$. Thus the $s$ $k$-mers can be arranged on these $s$ rows independently, leading to the final result.
\begin{proof}[Proof of Theorem~\ref{Th:recurrence}]
  \begin{align}
    &\sum_{i=0}^{s} (-1)^i \binom{s}{i} a_{m-i}    \\
    =& \sum_{i=0}^{s-1} (-1)^i \binom{s-1}{i} \left[ a_{m-i} - a_{m-i-1} \right] \label{E:s-1} \qquad \text{by Eq.~\eqref{E:binom}}  \\ 
    =& \sum_{i=0}^{s-1} (-1)^i \binom{s-1}{i}  a_{(0)m-i} \qquad \text{by Lemma~\ref{L:top} } \label{E:s-1b}\\
    =& \sum_{i=0}^{s-2} (-1)^i \binom{s-2}{i} \left[  a_{(0)m-i} - a_{(0)m-i-1} \right]  \qquad \text{by Eq.~\eqref{E:binom}}  \label{E:s-2} \\
    =& \sum_{i=0}^{s-2} (-1)^i \binom{s-2}{i}  a_{(0)m-i}^{(m-i)}  \qquad \text{by Lemma~\ref{L:top} }  \label{E:s-2b} \\
    =& \sum_{i=0}^{s-2} (-1)^i \binom{s-2}{i} \left[ a_{(0, 1)m-i} + a_{(0, \bar{1})m-i}^{(m-i)} \right] \qquad \text{by Lemma~\ref{L:bottom} }  \label{E:s-2c}\\
    =& \sum_{i=0}^{s-2} (-1)^i \binom{s-2}{i} a_{(0, 1)m-i}  \qquad \text{by Lemma~\ref{L:j-free} } \label{E:s-2d} \\
    =& \cdots \\
    =& a_{(0, 1, \dots, s-1)m} = c(n, k)^s.
  \end{align}
  In the last step, the $s$ $k$-mers are in $s$ rows of the lattice that are $k$ sites apart, so their configurations
  are independent of each other.
  The number of arrangements for one $k$-mer in a lattice strip of $n \times k$ is denoted by $c(n, k)$.
  For the free boundary condition,
  when $n \ge k$,
  \[
    c(k, n) = (n-k+1) + n = 2n - k + 1,
  \]
  where the terms within parenthesis is for the $k$-mer in horizontal orientation,
  and the last term $n$ is for the $k$-mer in vertical orientation.
  When $n < k$,
  the first term does not exist and $c(k, n) = n$.

  For the cylinder boundary condition,
  when $n \ge k$,
  there are $n$ configurations for the $k$-mer in horizontal orientation,
  and $n$ configurations in vertical orientation, so
   $c(k, n) = 2n$.
  When $n < k$, $c(k, n) = n$ for vertical orientation only, as in the case of the free boundary condition.  
\end{proof}

\section{Concluding remarks} \label{S:remarks}

The recurrences of Eq.~\eqref{E:rec1} and Eq.~\eqref{E:rec2} show
that the enumerations of monomer-polymer coverings in two-dimensional lattices
have some regular patterns.
From the recurrences we can obtain the generating function for the number of coverings for a lattice with width $n$
and a fixed number of $k$-mers of $s$
in the form of~\cite{stanley_2011}
\[
   G(x; k,n,s) = \sum_{m=0}^\infty a_{m,s} x^m =  \frac{ c(n, k)^s x^{ks} } { (1-x)^{s+1} } + \frac{ P(x; k,n,s) } {(1-x)^s },
\]
where $P(x; k,n,s)$ is a polynomial in the indeterminate variable $x$ with a degree of $ks -1$.
The polynomial $P(x; k,n,s)$ is determined by the initial conditions of the recurrences.
The generating function has only one pole at $x=1$ with a multiplicity of $s+1$.
The expansion of the first term by the binomial theorem gives
\[
  c(n, k)^s \binom{m+s-ks}{s}
\]
as the contribution to $a_{m,s}$ from the first team,
but the expansion of the second term depends on the polynomial $P(x; k,n,s)$.
It would be interesting to find out what kinds of patterns the polynomial $P(x; k,n,s)$ has
and how they contribute to the enumerations of monomer-polymer coverings.

\bibliographystyle{jabbrv_abbrv} 
\bibliography{kmer_full_jname_2023_09_17.bib}

\end{document}